\newtheorem{claim}{Claim}
\begin{document}

\title{Linearizability: A Typo}
\thanks{This work was supported by the United States - Israel BSF grant No. 2018655}

\author{Gal Sela}
\affiliation{
  \institution{Technion}
  \country{Israel}
}
\email{galy@cs.technion.ac.il}

\author{Maurice Herlihy}
\affiliation{
  \institution{Brown University}
  \country{USA}
}
\email{mph@cs.brown.edu}

\author{Erez Petrank}
\affiliation{
  \institution{Technion}
  \country{Israel}
}
\email{erez@cs.technion.ac.il}

\begin{abstract}
Linearizability \cite{herlihy1990linearizability} is the de facto consistency condition for concurrent objects, widely used in theory and practice. Loosely speaking, linearizability classifies concurrent executions as correct if operations on shared objects appear to take effect instantaneously during the operation execution time. 
This paper calls attention to a somewhat-neglected aspect of linearizability: restrictions on how pending invocations are handled, an issue that has become increasingly important for software running on systems with non-volatile main memory.
Interestingly, the original published definition of linearizability includes a typo (a symbol is missing a prime) that concerns exactly this issue.  
In this paper we point out the typo and provide an amendment to make the definition complete. We believe that pointing this typo out rigorously and proposing a fix is important and timely. 
\end{abstract}

\begin{CCSXML}
<ccs2012>
<concept>
<concept_id>10011007.10010940.10010992.10010993</concept_id>
<concept_desc>Software and its engineering~Correctness</concept_desc>
<concept_significance>500</concept_significance>
</concept>
<concept>
<concept_id>10003752.10003809.10011778</concept_id>
<concept_desc>Theory of computation~Concurrent algorithms</concept_desc>
<concept_significance>300</concept_significance>
</concept>
<concept>
<concept_id>10010147.10011777.10011778</concept_id>
<concept_desc>Computing methodologies~Concurrent algorithms</concept_desc>
<concept_significance>100</concept_significance>
</concept>
</ccs2012>
\end{CCSXML}

\ccsdesc[500]{Software and its engineering~Correctness}
\ccsdesc[300]{Theory of computation~Concurrent algorithms}
\ccsdesc[100]{Computing methodologies~Concurrent algorithms}

\keywords{Linearizability; Correctness; Verification; Concurrent Algorithms; Concurrent Data Structures}

\maketitle
The conference version of this paper is available at \cite{sela2021linearizability}.

\section{Introduction}
Linearizability is the prevalent correctness condition for concurrent executions on shared objects. It determines whether a concurrent execution is correct by relating it to a sequential execution that satisfies the sequential specification of the object. 
To relate a valid sequential execution to a concurrent one, linearizability specifies an order of the concurrent operations, denoted \emph{linearization order}. On the one hand, linearizability requires that if we execute the operations sequentially one by one according to their linearization order (with the same parameters as in the concurrent execution), we obtain a sequential execution (with the same operation results as in the concurrent execution) that satisfies the sequential specification of the object. On the other hand, linearizability dictates that the linearization order preserve the order of non-overlapping operations in the original concurrent execution.
Namely, if an operation {\em op$_1$} completes before another operation {\em op$_2$} begins, then {\em op$_1$} must precede {\em op$_2$} in the linearization order.
A concurrent execution is called linearizable if it can be related as above to a legal sequential execution (i.e., one that satisfies the sequential specification of the object). The formal definition is provided in~\Cref{orig-lin-def-section}. 

So far, we ignored pending invocations in the execution. These are invocations of operations that start during the execution, but do not complete. The issue that we point out in this paper concerns the treatment of pending invocations. The original linearizability definition provides a treatment of pending invocations, stating which executions with pending invocations are linearizable. However, in this paper we argue that, due to a typo, this treatment of pending invocations is lacking. It contradicts our intuition about linearizable executions, and furthermore, causes the locality and nonblocking properties to not hold. We then propose a simple fix for the typo that fits intuition and obtains these desirable properties also for executions with pending invocations.  

Linearizability allows eliding some of the pending invocations, namely, excluding their operations from the related sequential execution. This can be interpreted as operations that do not yet take effect before the execution ends. The rest of the pending invocations appear in the sequential execution with a {\em response}, i.e., completion and returned results. 
These can be interpreted as operations in the concurrent execution that have taken effect although their responses have not yet been returned to the caller.
The responses appended in the sequential execution are determined in a way that fits the overall execution. In particular, responses are set so that the related sequential execution satisfies the sequential specification of the object. 

The problem that arises, due to the typo in the original definition, is that the definition does not restrict the order of operations that have pending invocations, even when these operations take effect, i.e., are included in the related sequential execution. In particular, a pending invocation of an operation {\em op} may be placed in a linearization order before other operations that completed earlier in the execution, even operations that completed before {\em op} started. Imagine an execution that starts with an operation {\em op$_1$} that reads a shared variable $x$. While $x$ is initially $0$, the operation weirdly reads the value $1$, and then completes and returns $1$. Later a new operation {\em op$_2$} is invoked on a different process. Operation {\em op$_2$} writes $1$ into $x$ and does not complete before the execution ends. Intuitively, this does not seem like an acceptable linearizable execution. However, under the existing definition with the typo, it is linearizable, because the pending invocation of operation {\em op$_2$} (that writes $1$) can be ordered before the completed operation {\em op$_1$} that reads $1$. 

Interestingly, beyond contradicting intuition, the typo in the original definition does not allow it to yield neither locality nor the nonblocking property. 
In \Cref{section:issues} we describe the intuitive problem with the typo in the definition and formally show that it is not local neither nonblocking.  

We propose a (syntactically very minor) modification to the definition that restricts the linearization order of operations with pending invocations that take effect. 
Similarly to completed operations, operations with pending invocations are ordered later than any operation that completes before they start. This modification makes the odd execution described above not linearizable. In \Cref{sec:original} we recall the formal original definition of linearizability (with the typo). In \Cref{sec:ammended-linearizability} we formally specify the amended definition, and in \Cref{sec-revisit} we revisit the issues that the typo raises and show that they are resolved. 
We also show in \Cref{equivalent-def-section} that an alternative equivalent definition of linearizability \cite{lynch1996distributed} is actually not equivalent to the definition of linearizability with the typo, but it is equivalent to the version that we propose without the typo. 
In \Cref{alternative-def-section} we discuss an alternative interpretation of the original definition (with the typo, but with a different definition of what an operation means) and explain why this alternative interpretation is problematic as well.  
Finally, we put the various linearizability definitions covered throughout the paper in context in \Cref{def-comparison-section}.

It is clear that the flaw in the definition is a typo, not a conceptual error, and the authors' intended meaning is clear in context. We believe no prior paper was rendered incorrect by relying on the original definition. However, linearizability is extremely important for concurrent executions. It has been used in thousands of papers and the definition with the typo has been replicated in numerous subsequent publications. We believe it is important to point out this typo and provide a rigorous discussion and a fix. 
The issue of pending invocations is becoming increasingly important as architectures with non-volatile main memory become commonplace. Non-volatile memory models encompass various definitions \cite[e.g.][]{izraelevitz2016linearizability,aguilera2003strict,guerraoui2004robust,berryhill2015robust} where a major focus is dealing with invocations pending at the time of a crash.
In this realm, pending invocations become critically important, making the fix of this typo timely.

\section{System Model and Linearizability Definition}
\label{sec:original}
We follow the terminology of the original linearizability paper~\cite{herlihy1990linearizability}, which contains additional motivation and detailed explanations.  
\subsection{Histories Terminology}
An execution of a concurrent system is modeled by a history.
A \emph{history} is a finite sequence of operation \emph{invocation} and \emph{response} events.
Each invocation or response event is associated with some object and some process. An invocation includes also an operation name and argument values, and a response includes a termination condition and results.
A response \emph{matches} an invocation if it is associated with the same object and process. An invocation is \emph{pending} in a history if no matching response follows the invocation. An \emph{extension} of $H$ is a history constructed by appending to the end of $H$ responses to zero or more pending invocations of $H$. 
A \emph{subhistory} of a history $H$ is a subsequence of the events of $H$.
\emph{complete(H)} is the maximal subhistory of $H$ consisting only of invocations and matching responses, without any pending invocations.
For a process $P$, the \emph{process subhistory} $H|P$ is the subsequence of all events in $H$ associated with the process $P$. 
For an object $x$, the \emph{object subhistory} $H|x$ is the subsequence of all events in $H$ associated with the object $x$.
Two histories $H$ and $H'$ are \emph{equivalent} if for every process $P$, $H|P=H'|P$.

A history $H$ is \emph{sequential} if it comprises a sequence of pairs of an invocation and a matching response, except possibly the last invocation, which might be the last event in the history, not accompanied by a matching response. A history that is not sequential is \emph{concurrent}.
A history is \emph{well-formed} if each of its process subhistories is sequential.
A \emph{single-object} history is one in which all events are associated with the same object.
A \emph{sequential specification} for an object is a prefix-closed set of single-object sequential histories for that object. 
A sequential history $H$ is \emph{legal} if each object subhistory $H|x$ belongs to the sequential specification for $x$.

We recall the exact definition of operations from \cite{herlihy1990linearizability}, which is inherent to the definition of linearizability:
\begin{definition}\label{op-def}
\emph{(Operation)}
An \emph{operation} in a history is a pair consisting of an invocation and the next matching response. 
\end{definition}
An operation $e_0$ \emph{precedes} (synonymously \emph{happens before}) an operation $e_1$ in a history $H$ if $e_0$ ends before $e_1$ begins, namely, $e_1$'s invocation event occurs after $e_0$'s response event in $H$.
Precedence in $H$ induces a partial order on operations of $H$, denoted $<_H$.
Informally, $<_H$ captures the "real-time" precedence order of operations in $H$.
We stress that only invocations that have matching responses are considered \emph{operations} and the order $<_H$ applies only to them.

\subsection{Linearizability Definition}\label{orig-lin-def-section}
The original definition of linearizability according to \cite{herlihy1990linearizability} follows:
\begin{definition}\label{orig-def}
\emph{(Linearizability)}
A well-formed history $H$ is \emph{linearizable} if it has an extension $H'$ such that:
\begin{enumerate}
\item[\textbf{L1:}]\vspace{-0.1cm}
There exists a legal sequential history $S$, to which $complete(H')$ is equivalent.
\item[\textbf{L2:}]
$<_H\subseteq<_S$.
\end{enumerate}
\end{definition}
$S$ is denoted the \emph{linearization} of $H$, and operations that appear in {\em complete}$(H')$ are denoted {\em linearized} operations. 
\Cref{orig-def} requires the existence of a linearization $S$ that satisfies two conditions. 
Condition L1 refers to each process individually, guaranteeing that all its linearized operations are in the same order and with the same results as in the legal sequential history $S$. Due to this equivalence to a legal sequential history, operations in $H$ act as if they were interleaved at the granularity of complete operations, and adhere to the sequential specification. 
Condition L2 guarantees that $S$ preserves the order of non-concurrent operations in $H$, so that it respects possible dependencies between operations in $H$.

\section{Issues with the definition with the typo}
\label{section:issues}
Linearizability enforces real-time precedence order on operations. \Cref{orig-def} enforces it only on operations that include both an invocation and a response in the given history. Fixing the typo extends the enforcement to linearized operations related to pending invocations as well, so that overall, the order is enforced on all linearized operations.
To establish the necessity of the typo fix, we present in \Cref{examples-subsection} motivating examples of executions classified as linearizable by \Cref{orig-def} although intuitively they do not seem like acceptable linearizable executions. Moreover, we show in \Cref{locality-subsection} that linearizability as defined in \Cref{orig-def} is not local, and show in \Cref{nonblocking-subsection} that it is not nonblocking.

\subsection{Executions Counter-Intuitively Classified As Linearizable} \label{examples-subsection}
We start with two simple examples of executions on a single object, that intuitively seem non-linearizable, but are classified as linearizable by \Cref{orig-def}. This stems from allowing operations related to pending invocations to appear to take effect before operations by other processes that precede them, since L2 enforces order among $H$'s operations only, and excludes all pending invocations of $H$. 

Consider the execution $H_1$ that appears in \Cref{fig:register-histories}, involving two processes: $A$ and $B$, operating on a register object initialized to 0.
$H_1$ is intuitively unacceptable, as $A$ cannot "predict the future" and read the value that $B$ has not yet even asked to write, and it cannot distinguish between the given execution and an execution in which $B$ does not invoke any write. Therefore, $A$ should return 0 and not 1.
Nevertheless, \Cref{orig-def} classifies the execution as linearizable, as there is an extension $H_1'$ and a legal sequential execution $S_1$ (see \Cref{fig:register-histories}) that adhere to the conditions in \Cref{orig-def}:
L1 holds since the events per process in $complete(H_1')$ and in $S_1$ are identical.
L2 vacuously holds since it enforces order only on operations of $H_1$, and $H_1$ has a single operation (since a pending invocation does not count as an operation, see \Cref{op-def}). In particular, L2 does not force $B$'s write to occur in $S_1$ after the read operation by $A$.

\setlength{\unitlength}{0.1mm}
\begin{figure}[b]

\captionsetup{singlelinecheck=off}
\caption[]{
\label{fig:register-histories}
Executions on a register:
\begin{itemize}
    \item 
      $H_1$ -- an execution with a return value conflicting with the order between the pending write and the preceding read.
    \item
      $complete(H_1')$ -- identical to $H_1'$, an extension of $H_1$.
    \item 
      $S_1$ -- a linearization of $H_1$.
\end{itemize}
}

\begin{picture}(800,160)
\put(10,45){$H_1$:}

\put(260,90){A}
\put(260,0){B}

\put(320,115){\line(0,-1){30}} 
\put(320,100){\line(1,0){200}} 
\put(520,115){\line(0,-1){30}} 
\put(375,115){Read()}
\put(512,130){1} 

\put(570,25){\line(0,-1){30}} 
\put(570,10){\line(1,0){200}} 
\put(610,25){Write(1)}
\multiput(790,10)(20,0){3}{\line(1,0){5}}

\end{picture}

\vspace{10pt}\par

\begin{picture}(800,160)
\put(0,160){\textcolor{gray}{\line(1,0){850}}}

\put(10,45){$complete(H_1')$:}

\put(260,90){A}
\put(260,0){B}

\put(320,115){\line(0,-1){30}} 
\put(320,100){\line(1,0){200}} 
\put(520,115){\line(0,-1){30}} 
\put(375,115){Read()}
\put(512,130){1} 

\put(570,25){\line(0,-1){30}} 
\put(570,10){\line(1,0){200}} 
\put(610,25){Write(1)}
\put(770,25){\line(0,-1){30}} 
\end{picture}

\vspace{10pt}\par

\begin{picture}(800,160)
\put(0,160){\textcolor{gray}{\line(1,0){850}}}

\put(10,45){$S_1$:}

\put(260,90){A}
\put(260,0){B}

\put(570,115){\line(0,-1){30}} 
\put(570,100){\line(1,0){200}} 
\put(770,115){\line(0,-1){30}} 
\put(625,115){Read()}
\put(762,130){1} 

\put(320,25){\line(0,-1){30}} 
\put(320,10){\line(1,0){200}} 
\put(520,25){\line(0,-1){30}} 
\put(360,25){Write(1)}
\end{picture}

\end{figure}

We bring as a second example the execution $H_2$ demonstrated in \Cref{fig:queue-histories}, involving two processes: $A$ and $B$, operating on a FIFO queue initialized to be empty.
$H_2$ is intuitively unacceptable due to the return value of the dequeue operation, which returns an item enqueued by the second enqueue operation rather than the first one, thus violating the FIFO requirement.
Nonetheless, as Condition L2 of \Cref{orig-def} does not enforce linearizations of $H_2$ to place $A$'s enqueue before $B$'s enqueue, $H_2$ is considered linearizable, by the extension $H_2'$ and the linearization $S_2$ that appear in \Cref{fig:queue-histories}.

\setlength{\unitlength}{0.1mm}
\begin{figure}
\captionsetup{singlelinecheck=off}
\caption[]{
\label{fig:queue-histories}
Executions on a FIFO queue:
\begin{itemize}
    \item 
      $H_2$ -- an execution with a return value conflicting with the order between the pending enqueue and the preceding enqueue.
    \item
      $complete(H_2')$ -- identical to $H_2'$, an extension of $H_2$.
    \item 
      $S_2$ -- a linearization of $H_2$.
\end{itemize}
}
\begin{picture}(750,160)

\put(-230,45){$H_2$:}
\put(20,90){A}
\put(20,0){B}

\put(100,115){\line(0,-1){30}} 
\put(100,100){\line(1,0){200}} 
\put(300,115){\line(0,-1){30}} 
\put(155,115){Enq(x)}

\put(400,115){\line(0,-1){30}} 
\put(400,100){\line(1,0){200}} 
\put(600,115){\line(0,-1){30}} 
\put(455,115){Deq()}
\put(592,130){y} 

\put(350,25){\line(0,-1){30}} 
\put(350,10){\line(1,0){300}} 
\put(448,25){Enq(y)}
\multiput(670,10)(20,0){3}{\line(1,0){5}}
\end{picture}

\vspace{10pt}\par

\begin{picture}(750,160)
\put(-240,160){\textcolor{gray}{\line(1,0){1210}}}

\put(-230,45){$complete(H_2')$:}
\put(20,90){A}
\put(20,0){B}

\put(100,115){\line(0,-1){30}} 
\put(100,100){\line(1,0){200}} 
\put(300,115){\line(0,-1){30}} 
\put(155,115){Enq(x)}

\put(400,115){\line(0,-1){30}} 
\put(400,100){\line(1,0){200}} 
\put(600,115){\line(0,-1){30}} 
\put(455,115){Deq()}
\put(592,130){y} 

\put(350,25){\line(0,-1){30}} 
\put(350,10){\line(1,0){300}} 
\put(448,25){Enq(y)}
\put(650,25){\line(0,-1){30}} 
\end{picture}

\vspace{10pt}\par

\begin{picture}(750,160)
\put(-240,160){\textcolor{gray}{\line(1,0){1210}}}

\put(-230,45){$S_2$:}
\put(20,90){A}
\put(20,0){B}

\put(450,115){\line(0,-1){30}} 
\put(450,100){\line(1,0){200}} 
\put(650,115){\line(0,-1){30}} 
\put(505,115){Enq(x)}

\put(750,115){\line(0,-1){30}} 
\put(750,100){\line(1,0){200}} 
\put(950,115){\line(0,-1){30}} 
\put(805,115){Deq()}
\put(942,130){y} 

\put(100,25){\line(0,-1){30}} 
\put(100,10){\line(1,0){300}} 
\put(198,25){Enq(y)}
\put(400,25){\line(0,-1){30}} 
\end{picture}

\end{figure}

\subsection{Linearizability With The Typo Is Not Local}
\label{locality-subsection}

A property of a concurrent system is \emph{local} (synonymously \emph{composable}) if the system as a whole satisfies it whenever each object in the system satisfies it individually.
Locality enables implementing and verifying objects independently, thus maintaining modularity. 
To demonstrate that linearizability as defined with the typo is not local,
we bring an execution $H$ in \Cref{fig:2-registers}, involving two processes: A and B, operating on two register objects initialized to 0: x and y. 
For each of x and y, the object subhistory of the presented execution $H$ is similar to $H_1$ (see \Cref{fig:register-histories}). As shown in \Cref{examples-subsection}, these subhistories are linearizable by \Cref{orig-def}. However, $H$ as a whole is not linearizable by \Cref{orig-def}: 
An appropriate extension $H'$ must include responses to both writes for the writes to be included in $complete(H')$, 
otherwise there will be no legal sequential execution $S$ equivalent to $complete(H')$, because the read operations could not legally return 1. 
Together with the order enforced by Condition L1 on operations by each process, we get the following order requirements, which form a cycle:
x.Read() must occur before y.Write(1) for $S$ to preserve the order of A's events (due to Condition L1 that requires $S|A=complete(H')|A$);
y.Write(1) must occur before y.Read() for $S$ to be a legal register history (which dictates in particular that 1 be a legal return value of y.Read());
y.Read() must occur before x.Write(1) for $S$ to preserve the order of B's events (due to Condition L1 that requires $S|B=complete(H')|B$); 
and finally x.Write(1) must occur before x.Read() for $S$ to be a legal register history.

\setlength{\unitlength}{0.1mm}
\begin{figure}[t]
\caption{$H$, a non-linearizable execution on two registers, although the object subhistory for each register is linearizable}
\label{fig:2-registers}
\begin{picture}(700,160)
\put(-100,45){$H$:}
\put(20,90){A}
\put(20,0){B}

\put(100,115){\line(0,-1){30}} 
\put(100,100){\line(1,0){200}} 
\put(300,115){\line(0,-1){30}} 
\put(145,115){x.Read()}
\put(292,130){1} 

\put(150,25){\line(0,-1){30}} 
\put(150,10){\line(1,0){200}} 
\put(350,25){\line(0,-1){30}} 
\put(195,25){y.Read()}
\put(342,40){1} 

\put(400,115){\line(0,-1){30}} 
\put(400,100){\line(1,0){250}} 
\put(460,115){y.Write(1)}
\multiput(670,100)(20,0){3}{\line(1,0){5}}

\put(450,25){\line(0,-1){30}} 
\put(450,10){\line(1,0){200}} 
\put(480,25){x.Write(1)}
\multiput(670,10)(20,0){3}{\line(1,0){5}}

\end{picture}

\end{figure}

\subsection{Linearizability With The Typo Is Not Nonblocking}
\label{nonblocking-subsection}
We look at pending invocations of {\em total} operations, which are operations defined for every object value (following the terminology of \cite{herlihy1990linearizability}).
A property of a concurrent system is \emph{nonblocking} if processes invoking total operations are never forced to wait for another pending invocation to complete.
Formally, linearizability is nonblocking if for each linearizable execution that has some pending invocation $inv$ of a total operation, there exists a matching response for $inv$ such that appending it to the execution results in a linearizable execution.

To demonstrate that linearizability as defined with the typo is not nonblocking, we look at the execution $H_1$ (see \Cref{fig:register-histories}).
This execution is linearizable by the original definition (as shown above) and has a pending invocation -- the invocation of B's write. Appending a response $resp$ for this write to $H_1$ results in the non-linearizable execution $H_1\cdot resp$: The pending write becomes an operation in $H_1\cdot resp$ and is thus ordered by $<_{H_1\cdot resp}$ as appearing after A's read. Condition L2 of \Cref{orig-def} dictates that a linearization of $H_1\cdot resp$ respect this order and place the write after the read, which makes it impossible for A's read to legally return 1.

Another counterexample is the execution $H_2$ (see \Cref{fig:queue-histories}). It is linearizable by the original definition (as shown above) and has a pending invocation -- the invocation of B's enqueue.
Appending a response $resp$ for this enqueue to $H_2$ results in the non-linearizable execution $H_2\cdot resp$: The pending enqueue becomes an operation in $H_2\cdot resp$ and is thus ordered by $<_{H_2\cdot resp}$ as appearing after A's enqueue. Condition L2 of \Cref{orig-def} dictates that the linearization respect this order and place B's enqueue after A's enqueue, which makes it impossible for A's dequeue to return y while obeying the FIFO specification of a FIFO queue.

\section{Amended Linearizability}
\label{sec:ammended-linearizability}
We bring the amended definition of linearizability, which fixes a typo in Condition L2
with a fix that enforces real-time precedence order on linearized operations related to pending invocations:
\begin{definition}\label{amended-def}
\emph{(Amended Linearizability)}
A well-formed history $H$ is \emph{linearizable} if it has an extension $H'$ such that:
\begin{enumerate}
\item[\textbf{L1:}]\vspace{-0.1cm}
There exists a legal sequential history $S$, to which $complete(H')$ is equivalent.
\item[\textbf{L2:}]
$<_{complete(H')}\subseteq<_S$.
\end{enumerate}
\end{definition}
L2 is equivalent to $<_{H'}\subseteq<_S$ because $complete(H')$ and $H'$ differ only in pending invocations and according to the definitions in~\cite{herlihy1990linearizability}, which we use too, pending invocations are not considered operations and a happens-before order does not apply to them. 
While writing L2 as above makes the definition easier to understand, note that writing L2 as $<_{H'}\subseteq<_S$ provides a fix that is within a single prime sign from the original definition. This missing prime is the typo in the original definition.

Some papers have used an alternative definition of operations, in which pending invocations are also considered as operations, to which a happens-before relation applies. We consider this alternative definition (with the original linearizability definition) in \Cref{sec:alternative} and show that it does not yield an adequate definition for linearizability. 

\section{Issues Revisited}
\label{sec-revisit}
We explain how the typo fix solves the issues raised in \Cref{section:issues}.

\subsection{Executions Become Non-Linearizable As Expected}
\label{examples-revisited}
We have shown in \Cref{examples-subsection} that $H_1$ (see \Cref{fig:register-histories}) is linearizable by \Cref{orig-def} although intuitively it is not an acceptable linearizable execution. With the amendment of L2, $H_1$ becomes non-linearizable by \Cref{amended-def}: To satisfy Condition L1, an appropriate extension $H_1'$ must include a response for the write operation, otherwise there will be no legal sequential execution $S_1$ equivalent to $complete(H_1')$, because the read operation could not legally return 1. In addition, an appropriate linearization $S_1$ must satisfy the amended L2 Condition, which dictates that the read precede the write in $S_1$ because it precedes it in $complete(H_1')$. This means the read must return 0 in $S_1$ for $S_1$ to be legal, which contradicts Condition L1 that requires $S_1$ to have the same return values as $complete(H_1')$.

In a similar fashion, $H_2$ (see \Cref{fig:queue-histories}) becomes non-linearizable by \Cref{amended-def} as desired: an appropriate extension $H_2'$ must include a response for Enq(y); thus, Enq(y) must happen in $S$ after Enq(x); and so Deq() must return $x$ in $S$, in contradiction to L1.

\subsection{Linearizability Becomes Local}
With the typo fix, the history $H$ demonstrated in \Cref{fig:2-registers} does not stand anymore as a counterexample to the locality of linearizability, since $H$'s object subhistories for $x$ and $y$ are not linearizable by \Cref{amended-def}, as explained for $H_1$ in \Cref{examples-revisited}. We will show that linearizability with the typo fix is indeed local.

Next, we repeat the locality proof from the original paper, explain why the proof does not hold for the definition with the typo, and describe an amendment of the proof that makes it correct for \Cref{amended-def}. We start by recalling the theorem and its proof as in the original paper. (We note that $H$ mentioned in the theorem is assumed to be a well-formed history, as all histories in \cite{herlihy1990linearizability}.)
 
\begin{quote} 
 
\begin{theorem}\label{locality-theorem}
$H$ is linearizable if and only if, for each object $x$, $H|x$ is linearizable.
\end{theorem}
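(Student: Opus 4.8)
The plan is to prove the locality theorem using the amended \Cref{amended-def}, following the structure of the classical Herlihy--Wing argument but taking care at exactly the point where the typo matters. The forward direction ($H$ linearizable $\Rightarrow$ each $H|x$ linearizable) is the easy one: given an extension $H'$ and a legal sequential linearization $S$ witnessing that $H$ is linearizable, I would restrict to a single object by taking $S|x$. One checks that $S|x$ is a legal sequential history for $x$ (legality is defined object-by-object, so this is immediate), that $complete((H|x)')$ for the natural extension $H'|x$ is equivalent to $S|x$, and that the real-time order is preserved. The key observation is that restriction to an object can only \emph{remove} ordering constraints, so $<_{complete(H'|x)}\,\subseteq\,<_{complete(H')}\,\subseteq\,<_S$ restricts correctly to $S|x$; hence L1 and the amended L2 both survive the projection.

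The substantive direction is the converse: assuming each $H|x$ is linearizable, I must assemble a single global linearization $S$ of $H$. First, for each object $x$ pick an extension and a legal sequential linearization $S_x$ of $H|x$. The extensions determine, for each object, which pending invocations get completed (become linearized) and which get dropped; I would combine these choices into a single extension $H'$ of $H$, so that $complete(H')|x$ is equivalent to $S_x$ for every $x$. The heart of the proof is to define a single total order on all linearized operations that simultaneously (i) agrees with each $S_x$ on operations of object $x$, guaranteeing legality of every object subhistory and hence L1, and (ii) contains the global real-time precedence order $<_{complete(H')}$, guaranteeing the amended L2. Concretely I would introduce a relation $\to$ on linearized operations given by the union of the per-object orders $<_{S_x}$ together with the global real-time order $<_{complete(H')}$, and argue that $\to$ is acyclic; any topological extension of $\to$ to a total order then yields the desired $S$, and equivalence per process plus per-object agreement give L1 while the inclusion of $<_{complete(H')}$ gives L2.

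The main obstacle — and precisely the step where the typo is decisive — is proving that $\to$ is acyclic. This is where I expect the real work to lie, and it is instructive to see how the argument fails for the original \Cref{orig-def}. The classical lemma states that if $a \to b$ via real-time precedence then $b \not\to a$, and more generally that a cycle would have to involve a real-time edge contradicting transitivity of $<_{complete(H')}$. A shortest cycle in $\to$ cannot use two consecutive same-object edges (since each $<_{S_x}$ is itself a total order, hence transitive), nor two consecutive real-time edges (since $<_{complete(H')}$ is transitive); so it must alternate between real-time edges and distinct-object order edges, and a careful case analysis shows such an alternation forces a real-time cycle, which is impossible because $<_{complete(H')}$ is a genuine partial order on the \emph{completed} operations. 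The crucial point is that under the amended L2 the real-time order ranges over \emph{all} linearized operations in $complete(H')$, including those arising from completed-in-the-extension pending invocations; under the original L2 it ranges only over operations of $H$, omitting exactly these, which is what breaks acyclicity and is why the two-register history of \Cref{fig:2-registers} escapes the contradiction. Once acyclicity is established, extending $\to$ to any compatible total order and reading off the corresponding sequential history completes the proof.
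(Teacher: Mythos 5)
Your proposal is correct and follows essentially the same route as the paper's (amended) proof: take the union of the per-object linearization orders with the global real-time order $<_{complete(H')}$, establish acyclicity via a minimal-cycle/alternation argument, and extend to a total order to obtain $S$. The acyclicity case analysis is only sketched, but it is the same argument the paper carries out, and you correctly pinpoint that the typo fix (using $<_{complete(H')}$ rather than $<_H$) is what makes both the same-object contradiction and Condition L1 go through.
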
 
\begin{proof}\label{locality-proof}
The "only if" part is obvious.

For each $x$, pick a linearization of $H|x$. Let $R_x$ be the set of responses appended to $H|x$ to construct that linearization, and let $<_x$ be the corresponding linearization order. Let $H'$ be the history constructed by appending to $H$ each response in $R_x$. We will construct a partial order $<$ on the operations of $complete(H')$ such that: (1) For each $x$, $<_x \subseteq <$, and (2) $<_H \subseteq <$. Let $S$ be the sequential history constructed by ordering the operations of $complete(H')$ in any total order that extends $<$. Condition (1) implies that $S$ is legal, hence that L1 is satisfied, and Condition (2) implies that L2 is satisfied.

Let $<$ be the transitive closure of the union of all $<_x$ with $<_H$. It is immediate from the construction that $<$ satisfies Conditions (1) and (2), but it remains to be shown that $<$ is a partial order. We argue by contradiction.
If not, then there exists a set of operations $e_1, \dots, e_n$, such that $e_1<e_2<\dots<e_n$, $e_n<e_1$, and each pair is directly related by some $<_x$ or by $<_H$. Choose a cycle whose length is minimal.

Suppose all operations are associated with the same object $x$. Since $<_x$ is a total order, there must exist two operations $e_{i-1}$ and $e_i$ such that $e_{i-1}<_H e_i$ and $e_i<_x e_{i-1}$, contradicting
the linearizability of $x$.

The cycle must therefore include operations of at least two objects. By reindexing if necessary, let $e_1$ and $e_2$ be operations of distinct objects. Let $x$ be the object associated with $e_1$. We claim that none of $e_2, \dots, e_n$ can be an operation of $x$. The claim holds for $e_2$ by construction.
Let $e_i$ be the first operation in $e_3, \dots, e_n$, associated with $x$. Since $e_{i-1}$ and $e_i$ are unrelated by $<_x$, they must be related by $<_H$; hence the response of $e_{i-1}$ precedes the invocation
of $e_i$. The invocation of $e_2$ precedes the response of $e_{i-1}$, since otherwise $e_{i-1} <_H e_2$, yielding the shorter cycle $e_2,\dots, e_{i-1}$. Finally, the response of $e_1$ precedes the invocation of $e_2$, since $e_1 <_H e_2$ by construction.
It follows that the response to $e_1$ precedes the invocation of $e_i$, hence $e_1 <_H e_i$, yielding the shorter cycle $e_1, e_i, \dots, e_n$.

Since $e_n$ is not an operation of $x$, but $e_n < e_1$, it follows that $e_n <_H e_1$. But $e_1 <_H e_2$ by construction, and because $<_H$ is transitive, $e_n <_H e_2$, yielding the shorter cycle $e_2, \dots, e_n$, the final contradiction. 

\end{proof}
\end{quote}

The above proof does not hold for the definition with the typo since the history $S$ constructed in the proof is not guaranteed to satisfy Condition L1 of linearizability. 
L1 requires $S$ to preserve the precedence order of the linearized operations by each process.
But the constructed $S$ is not guaranteed to preserve the order between two linearized operations by the same process on two different objects, if the later operation of the two is related to a pending invocation in $H$. For operations on different objects, $S$ is guaranteed to preserve order only among linearized operations whose invocations are not pending in $H$, due to Condition (2) with the typo in the proof (namely, $<_H \subseteq <$ rather than $<_{complete(H')} \subseteq <$). 
For example, the history $S$ demonstrated in \Cref{fig:2-registers-wrong-lin} may be constructed by the proof (when using \Cref{orig-def}) as a linearization of $H$ from \Cref{fig:2-registers}. Due to the reversed order for $B$, $complete(H')|B\neq S|B$, and so $S$ does not satisfy L1 for $H$.

\setlength{\unitlength}{0.1mm}
\begin{figure}[b]
\caption{$S$, a sequential history that might be constructed in the proof of \Cref{locality-theorem} as a linearization of $H$}
\label{fig:2-registers-wrong-lin}
\begin{picture}(1050,160)
\put(-100,45){$S$:}
\put(20,90){A}
\put(20,0){B}

\put(350,115){\line(0,-1){30}} 
\put(350,100){\line(1,0){200}} 
\put(550,115){\line(0,-1){30}} 
\put(395,115){x.Read()}
\put(542,130){1} 

\put(950,25){\line(0,-1){30}} 
\put(950,10){\line(1,0){200}} 
\put(1150,25){\line(0,-1){30}} 
\put(995,25){y.Read()}
\put(1142,40){1} 

\put(650,115){\line(0,-1){30}} 
\put(650,100){\line(1,0){250}} 
\put(900,115){\line(0,-1){30}} 
\put(710,115){y.Write(1)}

\put(100,25){\line(0,-1){30}} 
\put(100,10){\line(1,0){200}} 
\put(300,25){\line(0,-1){30}} 
\put(130,25){x.Write(1)}

\end{picture}

\end{figure}

To fix the proof, we replace every reference to $<_H$ in the proof with $<_{complete(H')}$, similarly to the typo fix of the definition. 
The fixed proof is correct when applied to the linearizability definition with the typo fix. In particular, thanks to the fix of Condition (2), L1 is guaranteed to be satisfied.
We note that the fixed proof naturally holds for the definition with the typo fixed, but it does not work for the linearizability definition with the typo, which is not local as proven above. The barrier in this case is that the constructed $<$ is not necessarily a partial order, but rather might contain cycles. In detail, the above proof argues by contradiction that $<$ is a partial order. It assumes an operation cycle $e_1<e_2<\dots<e_n<e_n<e_1$ satisfying certain properties and needs to reach a contradiction. In the case that all operations in this cycle are associated with the same object $x$, then as the proof states, there must exist two operations in the cycle, $e_{i-1}$ and $e_i$, such that $e_{i-1}<_{complete(H')} e_i$ (referring to $<_{complete(H')}$ rather than $<_H$ is due to the proof fix) and $e_i<_x e_{i-1}$. This contradicts the linearizability of $H|x$ if linearizability is defined with the typo fix, namely, L2 for $H|x$ requires $<_{complete(H')|x}\subseteq<_x$, because then $e_{i-1}<_{complete(H')} e_i$ leads to $e_{i-1}<_x e_i$, contradicting the asymmetry of $<_x$. However, without the definition fix (namely, with L2 for $H|x$ requiring $<_{H|x}\subseteq<_x$), we do not reach a contradiction if $e_i$ is an operation related to a pending invocation of $H$. The reason is that for such an $e_i$, $e_{i-1}<_{complete(H')} e_i$ does not lead to $e_{i-1}<_{H|x} e_i$ (since $<_{H|x}$ does not refer to $e_i$ at all), and so L2 does not imply $e_{i-1}<_x e_i$.
Consider, for example, $H$ from \Cref{fig:2-registers}, and let $e_1$ be x.Read() and $e_2$ be x.Write(1). $<$ constructed in the proof contains the cycle $e_1<e_2<e_1$, which stems from $e_1<_{complete(H')} e_2$ and $e_2<_x e_1$.

\subsection{Linearizability Becomes Nonblocking}
With the typo fix, the histories $H_1$ and $H_2$ do not stand anymore as counterexamples to linearizability being nonblocking, since they are not linearizable by \Cref{amended-def}, as explained in \Cref{examples-revisited}. We will show that linearizability with the typo fix is indeed nonblocking.

Next, we repeat the nonblocking property proof from the original paper, explain why the proof does not hold for the definition with the typo, and why it does hold for \Cref{amended-def}. We start by recalling the theorem and its proof as in the original paper. 
 
\begin{quote} 
 
\begin{theorem}\label{nonblocking-theorem}
Let $inv$ be an invocation of a total operation. If $\langle x\ inv\ P\rangle$ is a pending invocation in a linearizable history $H$, then there exists a response $\langle x\ res\ P\rangle$ such that $H\cdot\langle x\ res\ P\rangle$ is linearizable. 
\end{theorem}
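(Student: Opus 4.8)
The plan is to prove the statement for the amended definition (\Cref{amended-def}) via a two-case argument on a witnessing linearization. Since $H$ is linearizable, I would fix an extension $H'$ of $H$ together with a legal sequential history $S$ such that $complete(H')$ is equivalent to $S$ and $<_{complete(H')}\subseteq<_S$. The first observation is that an extension only ever appends responses, so every invocation already occurs in $H$; in particular, because $H$ is well-formed, $H|P$ is sequential and $inv$ is the unique pending invocation of $P$. The proof then splits according to whether the witness $H'$ already completed $inv$, and in each case I produce both a response $\langle x\ res\ P\rangle$ and a fresh witness for $H\cdot\langle x\ res\ P\rangle$ (which is itself well-formed, as appending a matching response keeps $H|P$ sequential).

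In the first case $inv$ has a matching response $\langle x\ res\ P\rangle$ in $H'$, so the corresponding operation already appears in $complete(H')$ and hence in $S$. I would take this $res$ as the sought response and reuse the same $S$. The only thing to check is that $S$ is still a valid linearization of $H\cdot\langle x\ res\ P\rangle$: I build an extension $H''$ of $H\cdot\langle x\ res\ P\rangle$ by appending to it the remaining responses of $H'$. The key technical point is that the ordering of the trailing appended responses is irrelevant: since all invocations lie in $H$ and the appended responses come after all of $H$, reordering them changes neither $complete(\cdot)$ up to equivalence nor the induced precedence order. Consequently $complete(H'')$ is equivalent to $complete(H')$ and $<_{complete(H'')}=<_{complete(H')}$, so both L1 and L2 transfer immediately from the witness for $H$.

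In the second case $inv$ is still pending in $H'$, so its operation is absent from $complete(H')$ and from $S$. Here I would invoke totality: since the operation is defined for every object value, applying it to the state of $x$ reached after $S|x$ yields a response $res$ for which $S'=S\cdot\langle x\ inv\ P\rangle\cdot\langle x\ res\ P\rangle$ is legal (only the $x$-subhistory changes, and totality keeps it in the sequential specification). I place the new operation $e_{\mathrm{new}}$, consisting of $\langle x\ inv\ P\rangle$ and $\langle x\ res\ P\rangle$, \emph{last} in $S'$, and take $H''=H\cdot\langle x\ res\ P\rangle$ followed by the remaining responses of $H'$. For L1, $complete(H'')$ augments $complete(H')$ by exactly $e_{\mathrm{new}}$, and since $P$ has no other pending invocation we get $complete(H'')|P=(complete(H')|P)\cdot\langle x\ inv\ P\rangle\langle x\ res\ P\rangle=S'|P$, while all other process subhistories are unchanged; hence $complete(H'')$ is equivalent to the legal $S'$. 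For L2, the decisive point is that $e_{\mathrm{new}}$ is maximal in $<_{complete(H'')}$: its response is a trailing event occurring after every invocation, so $e_{\mathrm{new}}$ precedes nothing, and placing it last in $S'$ is therefore consistent; the operations that do precede it are exactly those already in $S$, and they precede it in $S'$ as well.

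I expect the second case to be the main obstacle, and it is exactly the situation the typo fix is meant to handle. Two requirements must be met at once there: totality must secure the legality of $S'$ (L1), and the newly completed operation must be placeable consistently with the real-time order (L2). The amended L2, phrased through $<_{complete(H')}$ rather than $<_H$, is what makes this clean, since it guarantees that completing $inv$ and appending its operation at the end never conflicts with an already-linearized operation, which is precisely the failure that $H_1$ and $H_2$ exhibit under \Cref{orig-def}.
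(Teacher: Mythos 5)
Your proposal is correct and follows essentially the same route as the paper: the same two-case split on whether the witness already completes $inv$, reuse of $S$ (with the observation that the trailing appended responses can be reordered so that $\langle x\ res\ P\rangle$ comes first) in the first case, and totality plus appending the new operation at the end of $S$ in the second. You merely spell out the L1/L2 verifications in more detail than the paper does.
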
 
\begin{proof}\label{nonblocking-proof}
Let $S$ be any linearization of $H$. If $S$ includes a response $\langle x\ res\ P\rangle$ to $\langle x\ inv\ P\rangle$, we are done, since $S$ is also a linearization of $H\cdot\langle x\ res\ P\rangle$. 
Otherwise, $\langle x\ inv\ P\rangle$ does not appear in $S$ either, since linearizations, by definition, include no pending invocations. Because the operation is total, there exists a response $\langle x\ res\ P\rangle$ such that $S' = S \cdot \langle x\ inv\ P\rangle \cdot \langle x\ res\ P\rangle$ is legal. $S'$, however, is a linearization of $H\cdot \langle x\ res\ P\rangle$, and hence is also a linearization of $H$. 
\end{proof}

\end{quote}

The first part of the above proof does not hold for the definition with the typo, as Condition L2 of \Cref{orig-def}, applied to the history $H\cdot\langle x\ res\ P\rangle$, might not be satisfied by $S$:
$S$, as a linearization of $H$, is guaranteed by L2 of \Cref{orig-def} to respect the precedence order $<_H$. The issue is that $<_H$ does not enforce any order on pending invocations of $H$, in particular $\langle x\ inv\ P\rangle$. Since this invocation is not pending in the history $H\cdot\langle x\ res\ P\rangle$, L2 for this history requires more than L2 for $H$ guarantees. In particular, it requires that a linearization of $H\cdot\langle x\ res\ P\rangle$ order the operation of $\langle x\ inv\ P\rangle$ after any operation completed beforehand. However, $S$ might not satisfy it (like in the examples brought in \Cref{nonblocking-subsection}).

The first part of the proof does hold for the amended definition:
If $S$ is a linearization of $H$ by \Cref{amended-def} that includes a response $\langle x\ res\ P\rangle$ to $\langle x\ inv\ P\rangle$, then there exists an extension $H'$ of $H$, which includes $\langle x\ res\ P\rangle$ as the first response appended after $H$, such that $S$ and $H'$ satisfy \Cref{amended-def} for $H$. 
This $H'$ is also an extension of $H\cdot\langle x\ res\ P\rangle$, and so the same $S$ and $H'$ satisfy Conditions L1 and L2 of \Cref{amended-def} for $H\cdot\langle x\ res\ P\rangle$ as well.

\section{An Alternative Interpretation}
\label{alternative-def-section}\label{sec:alternative}

While the original paper only considers completed operations, i.e., an operation is a pair of an invocation and the next matching response, it might seem that if we also consider pending invocations as valid operations, as some papers do, then the original definition may work adequately, leading to an easy fix for the typo in the original definition. In this section we show that this is not the case.  Formally, an alternative definition for an operation would be:
\begin{definition}\label{alternative-op-def}
\emph{(Operation - alternative definition)}
An \emph{operation} in a history is either a pair consisting of an invocation and the next matching response, or a pending invocation only in case of an invocation that has no matching response in the execution. 
\end{definition}

This leads to an alternative interpretation of the linearizability definition with the typo, with $<_H$ applied to operations by \Cref{alternative-op-def} -- including pending invocations. This interpretation of operations does solve the problem pointed out in Section~\ref{section:issues} for the original linearizability definition, since it makes Condition L2 cover pending invocations. But this brings about a new problem, as not only linearized pending invocations are covered, but also pending invocations that are not linearized, namely, eliminated from $complete(H')$ and $S$ (which are equivalent by Condition L1). The definition by this interpretation might exclude an execution that seems legitimately linearizable: for an execution containing a pending invocation $e_2$ that cannot be legally linearized (and hence cannot appear in a linearization $S$) and an operation $e_1$ that precedes $e_2$, there exists no appropriate linearization $S$, since L2 implies that $e_1 <_S e_2$ and in particular that $e_2$ appears in $S$ (including a response, as $S$ is equivalent to $complete(H')$ by L1, which means it contains no pending invocations). 

An example of such a history that seems naturally linearizable but would be ruled out by the original definition with the alternative interpretation of an operation follows.
Consider a stack object, where if the stack is empty, then a popping process spins until an item is pushed into the stack.
Consider the history $H_s$ with processes $A$ and $B$ illustrated in \Cref{fig:stack-execution}.
The first pop’s response precedes the second pop’s invocation in $H_s$, hence $1^{st}\ Pop <_{H_s} 2^{nd}\ Pop$ (interpreting the happens-before relation relying on the alternative operation definition). Condition L2 of the linearizability definition implies that the same relation appears in $S$: $1^{st}\ Pop <_S 2^{nd}\ Pop$, and in particular the second pop appears in $S$.
However, the second pop cannot be included in $S$ because it cannot be legally completed on an empty stack. 

This problem regarding the definition of linearizability in the alternative interpretation is different from the problem with the original interpretation (pointed out in Section~\ref{section:issues}), but we remark that applying our fix, i.e., changing L2 to 
$<_{complete(H')}\subseteq<_S$, solves this problem as well and can make the alternative interpretation be an adequate (equivalent) definition for linearizability.

\setlength{\unitlength}{0.1mm}
\begin{figure}[t]
\caption{$H_s$, an execution on a stack with a second pop that cannot be completed}
\label{fig:stack-execution}
\begin{picture}(770,160)
\put(-100,45){$H_s$:}
\put(20,90){A}
\put(20,0){B}

\put(100,115){\line(0,-1){30}} 
\put(100,100){\line(1,0){200}} 
\put(300,115){\line(0,-1){30}} 
\put(143,115){Push(1)}

\put(350,25){\line(0,-1){30}} 
\put(350,10){\line(1,0){200}} 
\put(410,25){Pop()}
\put(550,25){\line(0,-1){30}} 
\put(542,40){1} 

\put(600,25){\line(0,-1){30}} 
\put(600,10){\line(1,0){200}} 
\put(660,25){Pop()}
\multiput(820,10)(20,0){3}{\line(1,0){5}}

\end{picture}

\end{figure}

\section{An Equivalent Definition} \label{equivalent-def-section}
According to the intuitive discussion in the original paper~\cite{herlihy1990linearizability}, linearizability provides the illusion that each operation takes effect instantaneously at some point between its invocation and its response. This point was later denoted a \emph{linearization point}. 
Referring to the intuitive meaning of linearizability, 
it makes sense that if a pending invocation takes effect, it does so instantaneously at some point after the invocation and before the end of the execution. 
Such an interpretation of linearizability has appeared in~\cite{lynch1996distributed}. Moreover, many data structure implementations \cite[e.g.][]{michael1996simple,michael2002high} are proven to be linearizable by listing linearization points as locations in the code for each of their methods. These locations naturally occur during the method call, after the invocation (and before the response or the end of the execution), obliviously of whether the invocation has a matching response in the original execution.

We next specify an equivalent definition of linearizability, similar to the \emph{atomicity} definition in \cite{lynch1996distributed}, which formalizes the above intuitive interpretation of linearizability.
\begin{definition}\label{equivalent-def}
\emph{(Linearizability by Linearization Points)}
A well-formed history $H$ is \emph{linearizable} if there exist distinct points in $H$, denoted linearization points, satisfying the following:
\begin{enumerate}
    \item
    For each operation, there exists a linearization point between its invocation and its response.
    \item
    There exists a subset $T$ of $H$'s pending invocations, such that for each invocation $inv$ in $T$ there exists a linearization point after the invocation, and there exists a response denoted $resp_{inv}$ for the invocation.
\end{enumerate}
Such that if we place each invocation that has a matching response and its matching response one right after another at their respective linearization point, do the same for each invocation $inv$ in $T$ and its response $resp_{inv}$, and exclude pending invocations not in $T$,
then the resulting sequence of invocations and responses, denoted $S$, is a legal sequential history.
\end{definition}

The original linearizability definition with the typo is not equivalent to \Cref{equivalent-def}. As a counterexample, $H_1$ demonstrated in \Cref{fig:register-histories} is linearizable by \Cref{orig-def} as shown in \Cref{examples-subsection}, but not by \Cref{equivalent-def}, since the pending write must be linearized for the read to return 1, but placing the write's linearization point after the read's one cannot yield a legal register sequential history.

We next show that fixing the typo in the linearizability definition makes the definition equivalent to \Cref{equivalent-def}, which formalizes the intuition behind linearizability.

\begin{claim}
\Cref{equivalent-def} (linearizability by linearization points) is equivalent to \Cref{amended-def} (amended linearizability).
\end{claim}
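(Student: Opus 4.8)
The plan is to establish the equivalence by proving the two implications separately: that amended linearizability (\Cref{amended-def}) implies linearizability by linearization points (\Cref{equivalent-def}), and conversely. In each direction the data of one definition (an extension $H'$ together with its legal $S$, versus a choice of linearization points with a set $T$ and responses $resp_{inv}$) must be converted into the data of the other, and the legality of the reconstructed sequential history together with the real-time constraints must be checked.

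For the direction from \Cref{amended-def} to \Cref{equivalent-def}, I would start from an extension $H'$ and a legal sequential history $S$ satisfying L1 and L2, take $T$ to be exactly the set of pending invocations of $H$ whose matching responses appear in $H'$ (equivalently, those surviving into $complete(H')$), and for each such $inv$ let $resp_{inv}$ be the appended response. It then remains to realize the total order that $S$ imposes on the operations of $complete(H')$ by distinct points inside $H$. To each completed operation I associate the interval between its invocation and its response, and to each $inv \in T$ the interval between its invocation and the end of $H$; every such interval lies within $H$. The core step is a placement lemma: a total order can be realized by distinct points, one per operation inside its interval, whenever the order is consistent with the interval precedence order. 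I would verify this hypothesis from L2: if the interval of $b$ ends before the interval of $a$ begins, then $b$'s response precedes $a$'s invocation in $complete(H')$, so $b <_{complete(H')} a$, whence $b <_S a$ by L2; thus $<_S$ never contradicts the interval order and a greedy assignment in $<_S$-order succeeds. Placing each invocation/response pair at its point then reproduces $S$ itself, which is legal, so \Cref{equivalent-def} holds.

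For the converse, I would start from linearization points, a set $T$, responses $resp_{inv}$, and the resulting legal sequential history $S$, and build the extension $H'$ by appending to $H$ the responses $resp_{inv}$ for every $inv \in T$, leaving the pending invocations outside $T$ untouched; then $complete(H')$ consists of the completed operations of $H$ together with the $T$-operations carrying their responses. For L1, I claim $complete(H')$ is equivalent to $S$: fixing a process $P$, well-formedness makes $P$'s operation intervals disjoint and linearly ordered in $H$, so their linearization points occur in the same order as the operations, whence $S|P$ lists exactly $P$'s operations in their $H$-order with matching responses, which is precisely $complete(H')|P$. For L2 I would argue $<_{complete(H')} \subseteq <_S$ by cases: if $a <_{complete(H')} b$ then $a$'s response precedes $b$'s invocation, and since a $T$-operation's response sits at the very end of $H'$ while all invocations lie in $H$, $a$ cannot be a $T$-operation; hence $a$'s response and $b$'s invocation both lie in $H$, so the point of $a$ (at or before its response) precedes the point of $b$ (at or after its invocation), giving $a <_S b$.

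The main obstacle I expect is the placement lemma in the forward direction: converting the abstract order $<_S$ into concrete distinct points lying inside the prescribed intervals. The delicate aspect is the asymmetric treatment of the invocations in $T$, whose matching responses are appended at the end of $H'$ and therefore follow every event of $H$; their effective interval must be taken to reach the end of $H$, and it is exactly this that prevents any $T$-operation from preceding another operation under $<_{complete(H')}$ and keeps the interval order consistent with $<_S$. Once the placement lemma is in hand, the legality of the reconstructed sequence (forward) and the verification of L1 and L2 (backward) are routine. A minor technical point is that, since $H$ is finite, the required distinct points can be chosen in the finitely many gaps between events, refining a gap whenever two operations would otherwise compete for it.
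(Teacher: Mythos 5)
Your proposal is correct and follows essentially the same route as the paper: the same construction of $H'$ and $T$ when passing from linearization points to the amended definition, and the same greedy placement of linearization points in $<_S$-order for the converse (your ``placement lemma'' with intervals from invocation to response, resp.\ from invocation to the end of $H$, is precisely the paper's choice of each point as just after the later of the operation's invocation and the preceding operation's point). The one step you leave as an assertion --- that the greedy point always lands before the right end of its interval --- is exactly where the paper runs a backward chain of contradictions using L2, i.e.\ the consistency hypothesis you already verified; and your explicit observation that a $T$-operation's appended response prevents it from preceding anything in $<_{complete(H')}$ is, if anything, slightly more careful than the paper's wording of the L2 check.
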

\begin{proof}
First we prove that \Cref{equivalent-def} implies \Cref{amended-def}:
Assume $H$ is linearizable by \Cref{equivalent-def}. Let $T$ and $S$ be a subset and a history that satisfy the definition. We will prove that $H$ is linearizable by \Cref{amended-def} with the same $S$.
Form $H'$ from $H$ by appending (in some arbitrary order) for each invocation in $T$, the response appended for it in $S$. 
L1 holds: For each process, $complete(H')$ is made of the same events as $S$. Their order is the same in $complete(H')$ and $S$, since $S$ is constructed by "shrinking" each operation (invocation and response) to its linearization point, which is placed by \Cref{equivalent-def} between an invocation and a following response.
L2 holds as well, since
if one operation precedes another operation in $complete(H')$, meaning the first operation's response happens before the second operation's invocation in $complete(H')$, then this order is preserved in $S$, in which the response of the first operation is moved to an earlier point (to the linearization point of the first operation) and the invocation of the second operation is moved to a later point (to the linearization point of the second operation).

We proceed to prove the other direction - \Cref{amended-def} implies  \Cref{equivalent-def}.
Assume $H$ is linearizable by \Cref{amended-def}. Let $H'$ and $S$ be an extension of $H$ and a linearization that satisfy the definition. 
We define $T$ to be the set of all pending invocations in $H$ for which a response is added in $H'$.
Next, we pick linearization points for operations of $H$ and for pending invocations of $T$, namely for operations of $complete(H')$, which are -- due to L1 -- the operations of $S$.
Denote the $i^{th}$ operation in $S$ by $e_i$.
We pick the linearization point of $e_i$ to be right after the later of the following: $e_i$'s invocation, and the linearization point of $e_{i-1}$. 

We will prove that our selected linearization points satisfy the requirements of \Cref{equivalent-def}.
First, Conditions (1) and (2) of \Cref{equivalent-def} hold as the linearization point of each $e_i$ is after its invocation in $complete(H')$ by definition, and also before its response as we next show. If not, it implies that the linearization point of $e_i$ is set right after the linearization point of $e_{i-1}$, and that the linearization point of $e_{i-1}$ is after $e_i$'s response. If the linearization point of $e_{i-1}$ was picked to be right after its invocation, it means that $e_i<_{complete(H')} e_{i-1}$ and we reach a contradiction by Condition L2 of \Cref{amended-def} (thanks to the typo fix). Else, it was picked to be right after the linearization point of $e_{i-2}$, and we continue with the same arguments until reaching a contradiction (the process is guaranteed to stop at $e_1$ at the latest).
Second, the linearization point order preserves the order of operations in $S$ by our definition of the linearization points. Therefore, the history constructed in \Cref{equivalent-def} by "shrinking" each operation of $S$ (namely, moving its invocation and response) to its linearization point, is equal to $S$ and is thus a legal sequential history.

\end{proof}

\section{Comparison of all definition versions}
\label{def-comparison-section}
Let $H$ be a history. Then for each choice of an extension $H'$, we may divide the invocations in $H$ into three categories: 
\begin{enumerate}
\item\label{non-pending-invocations}
Invocations that have a matching response in $H$. 
\item\label{pending-linearized-invocations}
Pending invocations in $H$ that have a matching response in $H'$.
\item\label{pending-non-linearized-invocations}
Pending invocations in $H$ that do not have a matching response in $H'$.
\end{enumerate}
The operations related to invocations of categories (\ref{non-pending-invocations}) and (\ref{pending-linearized-invocations})
form $complete(H')$. Therefore, as implied by Condition L1 of linearizability, if $H$ is linearizable then the linearization of $H$ which stems from this $H'$ is made of these operations.

The different versions of the linearizability definition differ in the operations among which their L2 Condition enforces order preservation.
\Cref{orig-def} forces preserving order among operations whose invocations are of category (\ref{non-pending-invocations}) only.
\Cref{amended-def} (and thus the equivalent \Cref{equivalent-def} as well) dictates that the linearization preserve the order of all operations in $complete(H')$, namely, all linearized operations, which are operations whose invocations are of categories (\ref{non-pending-invocations}) and (\ref{pending-linearized-invocations}).
Since L2 with the typo fix enforces precedence order on additional operations in comparison to the original L2, then the typo fix strengthens the definition of linearizability. Namely, the amended definition eliminates some executions that are linearizable according to the definition with the typo, e.g., executions $H_1$ (see \Cref{fig:register-histories}) and $H_2$ (see \Cref{fig:queue-histories}). 
The typo fix does not include any execution not classified as linearizable by the original definition: each execution linearizable by \Cref{amended-def} is linearizable by \Cref{orig-def} as well because $<_H\subseteq<_{complete(H')}$.

\begin{figure}[b]
\caption{The relationship between histories categorized as linearizable by the different versions of linearizability}
\label{fig:venn}
\begin{tikzpicture}
    \draw[purple] (-1,0) ellipse (6.6cm and 2.5cm);
    \draw[teal] (0.6,0) ellipse (4.5cm and 2.1cm);
    \draw[violet] (2,0) ellipse (2.2cm and 1.6cm);
    
    \node[purple] at (-5.9,0.3) {Histories linearizable};
    \node[purple] at (-5.9,-0.3) {by \Cref{orig-def}};
    \node[teal] at (-2.2,0.3) {Histories linearizable};
    \node[teal] at (-2.2,-0.3) {by \Cref{amended-def}};
    \node[violet] at (2,0.6) {Histories linearizable by};
    \node[violet] at (2,0) {the alternative interpretation};
    \node[violet] at (2,-0.6) {of \Cref{orig-def}};
    
    \node at (-5.5,-1) {\LARGE\textbf{.}};
    \node at (-5.5,-1.3) {$H_1$};
    \node at (-4.5,-1) {\LARGE\textbf{.}};
    \node at (-4.5,-1.3) {$H_2$};
    \node at (-1.6,-1) {\LARGE\textbf{.}};
    \node at (-1.6,-1.3) {$H_s$};
\end{tikzpicture}
\end{figure}

\Cref{orig-def} with the alternative interpretation of an operation as described in \Cref{sec:alternative}, forces preserving order among operations whose invocations are of any of the 3 above-mentioned categories, namely, operations related to all invocations. 
L2 in this interpretation implies that a linearization $S$ includes all pending invocations of $H$ (that are preceded by any response), including those of category (\ref{pending-non-linearized-invocations}) -- which cannot appear in $S$ by Condition L1. Consequently, for a history to be classified as linearizable, there must exist an extension $H'$ for which $H$ has no invocations of category (\ref{pending-non-linearized-invocations}), namely, all pending invocations are linearized (again, referring only to invocations preceded by a response).
Hence, the definition in this interpretation excludes histories with pending invocations (preceded by some response) that cannot be linearized, like $H_s$ (see \Cref{fig:stack-execution}). It is thus stronger than \Cref{amended-def} -- in fact, too strong.

The inclusion relations between the histories categorized as linearizable by the different versions of linearizability are illustrated in \Cref{fig:venn}.

\bibliographystyle{ACM-Reference-Format}
\bibliography{refs}


\begin{thebibliography}{9}


\ifx \showCODEN    \undefined \def \showCODEN     #1{\unskip}     \fi
\ifx \showDOI      \undefined \def \showDOI       #1{#1}\fi
\ifx \showISBNx    \undefined \def \showISBNx     #1{\unskip}     \fi
\ifx \showISBNxiii \undefined \def \showISBNxiii  #1{\unskip}     \fi
\ifx \showISSN     \undefined \def \showISSN      #1{\unskip}     \fi
\ifx \showLCCN     \undefined \def \showLCCN      #1{\unskip}     \fi
\ifx \shownote     \undefined \def \shownote      #1{#1}          \fi
\ifx \showarticletitle \undefined \def \showarticletitle #1{#1}   \fi
\ifx \showURL      \undefined \def \showURL       {\relax}        \fi
\providecommand\bibfield[2]{#2}
\providecommand\bibinfo[2]{#2}
\providecommand\natexlab[1]{#1}
\providecommand\showeprint[2][]{arXiv:#2}

\bibitem[\protect\citeauthoryear{Aguilera and Fr{\o}lund}{Aguilera and
  Fr{\o}lund}{2003}]%
        {aguilera2003strict}
\bibfield{author}{\bibinfo{person}{Marcos~K Aguilera} {and}
  \bibinfo{person}{Svend Fr{\o}lund}.} \bibinfo{year}{2003}\natexlab{}.
\newblock \showarticletitle{Strict linearizability and the power of aborting}.
\newblock \bibinfo{journal}{\emph{Technical Report HPL-2003-241}}
  (\bibinfo{year}{2003}).
\newblock
\urldef\tempurl%
\url{https://hpl.hp.com/techreports/2003/HPL-2003-241.html}
\showURL{%
\tempurl}


\bibitem[\protect\citeauthoryear{Berryhill, Golab, and Tripunitara}{Berryhill
  et~al\mbox{.}}{2015}]%
        {berryhill2015robust}
\bibfield{author}{\bibinfo{person}{Ryan Berryhill}, \bibinfo{person}{Wojciech
  Golab}, {and} \bibinfo{person}{Mahesh Tripunitara}.}
  \bibinfo{year}{2015}\natexlab{}.
\newblock \showarticletitle{Robust shared objects for non-volatile main
  memory}. In \bibinfo{booktitle}{\emph{OPODIS}}.
\newblock
\urldef\tempurl%
\url{https://doi.org/10.4230/LIPIcs.OPODIS.2015.20}
\showDOI{\tempurl}


\bibitem[\protect\citeauthoryear{Guerraoui and Levy}{Guerraoui and
  Levy}{2004}]%
        {guerraoui2004robust}
\bibfield{author}{\bibinfo{person}{Rachid Guerraoui} {and}
  \bibinfo{person}{Ron~R Levy}.} \bibinfo{year}{2004}\natexlab{}.
\newblock \showarticletitle{Robust emulations of shared memory in a
  crash-recovery model}. In \bibinfo{booktitle}{\emph{ICDCS}}.
\newblock
\urldef\tempurl%
\url{https://doi.org/10.1109/ICDCS.2004.1281605}
\showDOI{\tempurl}


\bibitem[\protect\citeauthoryear{Herlihy and Wing}{Herlihy and Wing}{1990}]%
        {herlihy1990linearizability}
\bibfield{author}{\bibinfo{person}{Maurice Herlihy} {and}
  \bibinfo{person}{Jeannette~M. Wing}.} \bibinfo{year}{1990}\natexlab{}.
\newblock \showarticletitle{Linearizability: A correctness condition for
  concurrent objects}.
\newblock \bibinfo{journal}{\emph{TOPLAS}} \bibinfo{volume}{12},
  \bibinfo{number}{3} (\bibinfo{year}{1990}).
\newblock
\urldef\tempurl%
\url{https://doi.org/10.1145/78969.78972}
\showDOI{\tempurl}


\bibitem[\protect\citeauthoryear{Izraelevitz, Mendes, and Scott}{Izraelevitz
  et~al\mbox{.}}{2016}]%
        {izraelevitz2016linearizability}
\bibfield{author}{\bibinfo{person}{Joseph Izraelevitz},
  \bibinfo{person}{Hammurabi Mendes}, {and} \bibinfo{person}{Michael~L Scott}.}
  \bibinfo{year}{2016}\natexlab{}.
\newblock \showarticletitle{Linearizability of persistent memory objects under
  a full-system-crash failure model}. In \bibinfo{booktitle}{\emph{DISC}}.
\newblock
\urldef\tempurl%
\url{https://doi.org/10.1007/978-3-662-53426-7_23}
\showDOI{\tempurl}


\bibitem[\protect\citeauthoryear{Lynch}{Lynch}{1996}]%
        {lynch1996distributed}
\bibfield{author}{\bibinfo{person}{Nancy~A Lynch}.}
  \bibinfo{year}{1996}\natexlab{}.
\newblock \bibinfo{booktitle}{\emph{Distributed algorithms}}.
\newblock \bibinfo{publisher}{Elsevier}.
\newblock


\bibitem[\protect\citeauthoryear{Michael}{Michael}{2002}]%
        {michael2002high}
\bibfield{author}{\bibinfo{person}{Maged~M. Michael}.}
  \bibinfo{year}{2002}\natexlab{}.
\newblock \showarticletitle{High performance dynamic lock-free hash tables and
  list-based sets}. In \bibinfo{booktitle}{\emph{SPAA}}.
\newblock
\urldef\tempurl%
\url{https://doi.org/10.1145/564870.564881}
\showDOI{\tempurl}


\bibitem[\protect\citeauthoryear{Michael and Scott}{Michael and Scott}{1996}]%
        {michael1996simple}
\bibfield{author}{\bibinfo{person}{Maged~M. Michael} {and}
  \bibinfo{person}{Michael~L. Scott}.} \bibinfo{year}{1996}\natexlab{}.
\newblock \showarticletitle{Simple, fast, and practical non-blocking and
  blocking concurrent queue algorithms}. In \bibinfo{booktitle}{\emph{PODC}}.
\newblock
\urldef\tempurl%
\url{https://doi.org/10.1145/248052.248106}
\showDOI{\tempurl}


\bibitem[\protect\citeauthoryear{Sela, Herlihy, and Petrank}{Sela
  et~al\mbox{.}}{2021}]%
        {sela2021linearizability}
\bibfield{author}{\bibinfo{person}{Gal Sela}, \bibinfo{person}{Maurice
  Herlihy}, {and} \bibinfo{person}{Erez Petrank}.}
  \bibinfo{year}{2021}\natexlab{}.
\newblock \showarticletitle{Brief announcement: Linearizability: A typo}. In
  \bibinfo{booktitle}{\emph{PODC}}.
\newblock
\urldef\tempurl%
\url{https://doi.org/10.1145/3465084.3467944}
\showDOI{\tempurl}


\end{thebibliography}

\end{document}